\newcommand{\CS}{\textnormal{\textsf{CS}}}
\newcommand{\GL}{\textnormal{\textsf{GL}}} 
\renewcommand{\S}{\textnormal{\textsf{S}}} 
\newcommand{\ER}{\mathsf{ER}}
\newcommand{\GLP}{\mathsf{GLP}}
\newcommand{\PA}{\mathsf{PA}}
\newcommand{\labCS}{{\textnormal{labCS}^\infty}}
\newcommand{\transBi}{\textnormal{trans}_{\circ \bullet}}
\newcommand{\Dmnd}{\Diamond}
\newcommand{\Ra}{\Rightarrow}
\newcommand{\la}{\langle}
\newcommand{\ra}{\rangle}
\newcommand{\spa}{\;|\;}
\newcommand{\R}{\mathcal{R}}
\newcommand{\Pw}{\mathcal{P}}
\newcommand{\Mw}{\mathcal{M}}
\newcommand{\La}{\mathcal{L}}
\newcommand{\Seq}{\mathfrak{S}}
\newcommand{\tree}{\mathcal{T}}
\newcommand{\LaBi}{\La^{\Box \triangle}}
\newcommand{\Prop}{\mathtt{Prop}}
\newcommand{\Var}{\mathtt{Lab}}
\let\c@proposition\c@theorem
\let\c@corollary\c@theorem
\let\c@lemma\c@theorem
\let\c@definition\c@theorem
\let\c@example\c@theorem
\let\c@remark\c@theorem
\begin{document} 
\title{A Non-Wellfounded and Labelled Sequent Calculus for Bimodal Provability Logic}
\titlerunning{A Non-Wellfounded and Labelled Sequent Calculus for Bimodal PRL}
%
\author{Justus Becker}
\authorrunning{J. Becker}
%
\institute{University of Birmingham, Birmingham, UK \\
\email{jvb443@student.bham.ac.uk}}
\maketitle              
\begin{abstract}
We present a labelled and non-wellfounded calculus for the bimodal provability logic $\CS$.
The system is obtained by modelling the Kripke-like semantics of this logic. As in~\cite{Das_et_al24}, we enforce the second-order property of converse wellfoundedness by using techniques from cyclic proof theory. 
We will prove soundness and completeness of this system with respect to the semantics and provide a primitive decision procedure together with a way to extract countermodels. 

\keywords{Provability logic \and Modal Logic \and Bimodal logic \and Labelled sequents \and Non-wellfounded proofs.}
\end{abstract}
\section{Introduction}
Modal provability logics originate from interpreting the $\Box$-modality as ``it is provable that'' in a sufficiently strong (usually arithmetic) theory. 
In \cite{Solovay76}, Robert Solovay proved completeness of the modal logic $\GL$ with respect to the provability of Peano Arithmetic ($\PA$) as well as the completeness of the modal logic~$\S$ (also referred to as $\mathsf{GLS}$) with respect to the true provability of $\PA$.
Since then, the concept of provability logic has been extended towards more and more complex logics: for instance, Japaridze's polymodal provability logic $\GLP$ \cite{Japaridze86} or logics of non-standard provability \cite{Feferman60,Henk16,Shavrukov94,Visser89}.
Bimodal provability logics extend unimodal logics, such as $\GL$, by also incorporating the provability of a second theory. 

A useful technique in the proof theory of modal logic is the usage of \emph{labelled sequents}, which are syntactically enriched sequents utilising the Kripke frame correspondence of modal logics (see \cite{Negri05,Simpson1994,Vigano2000}). 
Here, we use a two-sorted approach to labels by using two kinds of relational atoms (see \cite{GargGenoveseNegri_2012} for a multi-labelled system), which  allows us to capture two modalities simultaneously. As the labelled structure internalises part of the semantics, we can sometimes read out countermodels from the leaves of failed proof search, which is a common feature of labelled calculi.

Cyclic and more generally \emph{non-wellfounded proofs} were originally developed for the modal $\mu$-calculus \cite{Niwiński96} and later also adapted for first-order logic handling induction \cite{Brotherston05,BrotherstonSimpson_2010}. In the recent decade non-wellfounded proof theory has been also adapted for Arithmetic \cite{Simpson_2017} and other logics, such as propositional dynamic logic \cite{DochertyRowe_2019}.

In \cite{Shamkanov14}, Daniyar Shamkanov provides the first cyclic and non-wellfounded proof systems for $\GL$. Since then, a lot of variants of cyclic systems for similar logics, such as $\mathsf{iGL}$ \cite{Iemhoff16,SierraMiranda23} and $\mathsf{Grz}$ \cite{Savateev17}, have been developed. For bimodal provability logic, however, no sequent system has been introduced thus far. 

We will combine these notions of non-wellfounded proofs together with labelled sequents, similarly to \cite{Das_et_al24}, to obtain a proof system for the basic bimodal provability logic Carlson-Smoryński ($\CS$), named after Tim Carlson and Craig Smoryński due to their foundational work in the area \cite{Carlson86,Smorynski85}.
Like in \cite{Das_et_al24}, we utilise the progress condition of our proofs to implement the second-order condition for models of the logic (i.e.~converse wellfoundedness).

In the following section, we introduce the syntax and Kripke-like semantics for $\CS$. In section \ref{sec:labCS}, we introduce the rules of a non-wellfounded and labelled proof system for this logic, and prove semantic soundness and completeness. The latter is obtained by a countermodel construction, which we get by a primitive proof search strategy, giving us a decision procedure for $\CS$.

\section{Bimodal Provability Logic}

We work in a two-sorted modal propositional language with modalities,~$\Box$ and~$\triangle$. As we are in a classical setting, it is enough to consider $\to$ and $\bot$ as the only logical connectives (where $\bot$ is a connective of arity zero), besides a countable set of propositional atoms $\Prop = \{p,q,r,... \}$. Wellformed formulas are then created by the following grammar:
$$A::= \bot \spa p \spa A \to A \spa \Box A \spa \triangle A$$ 
with $p \in \Prop$. We call this language $\LaBi$. 

Let us now define the \emph{basic bimodal provability logic} $\CS$. It is basic insofar as any bimodal provability logic contains $\CS$.
We refer the reader to \cite[Chapter 8]{Artemov2005} for an extensive overview of bimodal provability logics.
The logic $\CS$ is defined as follows.

\begin{definition} \label{def:CS}
    The logic $\mathsf{CS}$ is the smallest subset of $\LaBi$ containing all classical tautologies, the following axiom schemas \vspace{-1em}
    \begin{multicols}{2} 
   \begin{itemize} \setlength{\itemindent}{2em}
    \item[$(\textnormal{K}_\Box)$]
            $\Box (A \to B) \to (\Box A \to \Box B)$
    \item[$(\textnormal{L}_\Box)$] 
          $\Box (\Box A \to A ) \to \Box A$
    \item[$(\textnormal{4}_{\triangle\Box})$]
          $\Box A \to \triangle \Box A$ 
    \item[$(\textnormal{K}_\triangle)$] 
           $\triangle (A \to B) \to (\triangle A \to \triangle B)$
    \item[$(\textnormal{L}_\triangle)$] 
            $\triangle (\triangle A \to A ) \to \triangle A$ 
    \item[$(\textnormal{4}_{\Box\triangle})$]
        $\triangle A \to \Box \triangle A$ 
   \end{itemize}
  \end{multicols} \vspace{-1em}  
  \noindent
  and closed under modus ponens (mp) and both necessitation rules, (N$_\Box$) and (N$_\triangle$) (figure \ref{fig:HilbertRules}).
\end{definition}

\begin{figure}[t!]
\begin{mdframed}
    \centering 
    \begin{prooftree}
        \hypo{A \to B}
        \hypo{A}
        \infer[left label=(mp)]2{B}
    \end{prooftree} \hspace{23mm}
    \begin{prooftree}
        \hypo{A}
        \infer[left label=(N$_\Box$)]1{\Box A}
    \end{prooftree}  \hspace{23mm}
    \begin{prooftree}
        \hypo{A}
        \infer[left label=(N$_\triangle$)]1{\triangle A}
    \end{prooftree} 
\end{mdframed}
    \caption{Rules of the Hilbert Axiomatisation for $\CS$}
    \label{fig:HilbertRules}
\end{figure}

Observe that we can derive the axioms $\Box A \to \Box \Box A$ and $\triangle A \to \triangle \triangle A$ by the L{\"o}b and Kripke axioms ($L_\Box$, $K_\Box$ $L_\triangle$, and $K_\triangle$), which gives us essentially four different axioms that are reminiscent of the unimodal axiom $4$ $(\Box A \to \Box \Box A)$. 
Notice also that any unimodal fragment (either $\triangle$-free or $\Box$-free) of $\CS$ is exactly $\GL$ (up to renaming of modalities).

Recall that the logic $\GL$ is sound and complete with respect to the provability of $\PA$ \cite{Solovay76}. More formally, this means that we interpret modal formulas as arithmetic sentences such that the logical connectives are preserved and $\Box$ is interpreted as a provability predicate; then for any modal formula $A$ and any such interpretation $(\cdot)^*$ we have $\GL \vdash A$ iff $\PA \vdash A^*$. 
Similarly, we can describe bimodal provability logics in terms of the provability of two recursively enumerable (r.e.) theories $T,U$, where $\Box$ is interpreted as a provability predicate for $T$ and $\triangle$ as the provability predicate for $U$. 
It is shown in \cite[Theorem 3.15.ii]{Smorynski85} that there are theories $T,U$ extending Elementary Arithmetic such that their joint provability logic coincides with $\CS$. Notice, however, that these theories will not be natural, as the two modalities know essentially as little as possible about each other: the only interaction axioms are $\textnormal{4}_{\triangle\Box}$ and $\textnormal{4}_{\Box\triangle}$, which are purely consequences of the provable $\Sigma_1$-completeness of $T$ and $U$. 

Just as for $\GL$, there also exists for $\CS$ a well-behaved Kripke-like semantics. 
Recall that a model for $\GL$ (a \emph{$\GL$-model}) is a tuple $\la W ,{\prec}, V \ra$ with a non-empty set of points~$W$, a conversely wellfounded relation $\prec \; \subseteq W \times W$ and some valuation function $V : \Prop \to \Pw (W)$. We obtain a Carlson model by adding for each modality a subset into the structure.

\begin{definition}[\cite{Carlson86}]
    A \emph{Carlson model} $\Mw$ is a tuple $\la W, \prec , M_0 , M_1 , V \ra$ such that $\la W, \prec , V \ra$ is a $\mathsf{GL}$-model with two subsets $M_0 , M_1 \subseteq W$. 

    A modal formula $A \in \LaBi$ is evaluated on $\Mw$ inductively as follows.

    $\Mw , x \nVdash \bot$

    $\Mw , x \Vdash p$ iff $x \in V(p)$

    $\Mw , x \Vdash A \to B$ iff $\Mw , x \Vdash B$ or $\Mw , x \nVdash A$

    $\Mw , x \Vdash \Box A$ iff for all $y$ such that $x\prec y$ and $y \in M_0$ we have $\Mw , y \Vdash A$

    $\Mw , x \Vdash \triangle A$ iff for all $y$ such that $x\prec y$ and $y \in M_1$ we have $\Mw , y \Vdash A$
\end{definition}

If a formula $A$ holds at all points of a model $\Mw$, we write $\Mw \Vdash A$, and $\Mw \nVdash A$ if it does not. 
In the literature of bimodal provability logic, one also considers Carlson models which consist of either two kinds of relations (see for example \cite{Visser95}) or only one subset (see for example \cite{Beklemishev94}). This works especially well for extensions of $\CS$, however the Carlson models we introduced are more appropriate for considering the logic $\CS$ itself.
The syntax and semantics is then connected by the following completeness theorem.

\begin{theorem}[\cite{Smorynski85}]\label{thm:soundComplCS}
    For any formula $A \in \La^{\Box \triangle}$: $\mathsf{CS} \vdash A$ iff for all Carlson models $\Mw$: $\Mw \Vdash A$.
\end{theorem}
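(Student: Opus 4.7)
The plan is to prove the two directions separately, following the classical pattern for modal provability logics. Since the result is attributed to Smoryński, my sketch mirrors the canonical/filtration-based arguments used for $\GL$, extended to the bimodal setting.

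For soundness I would induct on the length of a $\CS$-derivation. Propositional tautologies are valid on any Kripke-style structure, modus ponens preserves pointwise satisfaction, and the two necessitation rules preserve global validity on a Carlson model. The $\textnormal{K}_\Box$ and $\textnormal{K}_\triangle$ axioms are immediate from the satisfaction clauses. The Löb axioms $\textnormal{L}_\Box$ and $\textnormal{L}_\triangle$ follow from converse well-foundedness of $\prec$ together with transitivity (inherited from the standard $\GL$-frame conditions). For the interaction axioms $\textnormal{4}_{\triangle\Box}$ and $\textnormal{4}_{\Box\triangle}$, the key observation is that if $x \prec y$ with $y \in M_1$ and $y \prec z$ with $z \in M_0$, then by transitivity $x \prec z$, so any formula forced at all $M_0$-successors of $x$ is forced at $z$; this gives $\Mw, y \Vdash \Box A$ whenever $\Mw, x \Vdash \Box A$, hence $\Mw, x \Vdash \triangle \Box A$, and the dual axiom is symmetric.

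For completeness I would argue the contrapositive: from $\CS \nvdash A$ construct a refuting Carlson model. Fix a finite adequate set $\Sigma \supseteq \{A\}$ closed under subformulas and single negations, and let the carrier $W$ be the maximal $\CS$-consistent subsets of $\Sigma$. Define $\Gamma \prec \Delta$ canonically to demand both $\Box$- and $\triangle$-transfer (every $\Box B \in \Gamma$ forces $B \in \Delta$, and likewise for $\triangle$), plus a strict-growth clause on the set of boxed subformulas realised, so that finiteness of $\Sigma$ yields converse well-foundedness. Choose $M_0, M_1 \subseteq W$ so that the witnesses refuting a $\Box B$-formula land in $M_0$ and those refuting a $\triangle B$-formula land in $M_1$, and set $V(p) = \{\Gamma \in W : p \in \Gamma\}$. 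A truth lemma by induction on formula complexity then establishes $\Mw, \Gamma \Vdash B \iff B \in \Gamma$, and a Lindenbaum-style extension producing some $\Gamma \ni \neg A$ refutes $A$.

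The main obstacle will be coordinating the two modalities in the canonical construction: one must ensure the interaction axioms $\textnormal{4}_{\triangle\Box}$ and $\textnormal{4}_{\Box\triangle}$ are validated by the chosen $M_0, M_1$ relative to $\prec$, without compromising the strict-growth clause that secures converse well-foundedness. Smoryński's original construction in \cite{Smorynski85} performs this bookkeeping carefully, and I would refer to that source for the detailed verification. Notably, the cyclic calculus developed in the remainder of the present paper will provide an alternative, more proof-theoretic route to the same completeness statement via the countermodel extraction procedure obtained from failed proof search.
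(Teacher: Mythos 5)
The paper does not prove this theorem at all: it simply cites \cite[Chapter 4, Theorem 3.5]{Smorynski85}, noting that $\CS$ can be recovered as a fragment of Smory\'nski's system $\mathsf{PRL}_2$. Your proposal instead sketches the argument directly, which is a legitimately different (and more informative) route. Your soundness half is correct: the verification of the interaction axioms $\textnormal{4}_{\triangle\Box}$ and $\textnormal{4}_{\Box\triangle}$ via transitivity, and of the L\"ob axioms via converse wellfoundedness, is exactly the standard argument and goes through without trouble.

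The completeness half, however, contains a step that would fail as written. You define $\Gamma \prec \Delta$ to ``demand both $\Box$- and $\triangle$-transfer (every $\Box B \in \Gamma$ forces $B \in \Delta$, and likewise for $\triangle$)''. In a Carlson model the clause for $\Box$ quantifies only over successors in $M_0$ and the clause for $\triangle$ only over successors in $M_1$, and $\CS$ proves neither $\triangle B \to \Box B$ nor its converse. So when you try to build a witness $\Delta \ni \neg B$ for a formula $\neg\Box B \in \Gamma$, you cannot in general also force $D \in \Delta$ for every $\triangle D \in \Gamma$: the set $\{\neg B\} \cup \{C : \Box C \in \Gamma\} \cup \{D : \triangle D \in \Gamma\}$ need not be $\CS$-consistent, precisely because the two modalities are independent. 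The transfer of the de-modalised formula must be conditioned on whether the successor is placed in $M_0$ or $M_1$; only the modalised formulas $\Box C$ and $\triangle D$ themselves persist to all successors (via the derivable $\Box C \to \Box\Box C$ and the axiom $\triangle D \to \Box\triangle D$, and symmetrically). You do flag the coordination of $M_0$ and $M_1$ as the main obstacle and defer to Smory\'nski, but the specific relation you wrote down is the wrong one, not merely an underspecified one. A smaller caveat: your closing remark that the calculus $\labCS$ of this paper gives an alternative route to this theorem is not quite right as the paper is organised, since the paper's corollary connecting $\CS$-provability to $\labCS$-provability itself invokes this theorem; closing that circle proof-theoretically would require a separate syntactic simulation of the Hilbert system inside $\labCS$.
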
 
\begin{proof}
    See \cite[Chapter 4, Theorem 3.5]{Smorynski85}, where $\CS$ can be defined as the $\Box$-free fragment of $\mathsf{PRL}_2$, that is, $\GL$ extended by two modalities which are weaker than $\Box$ and generally independent of each other.
\end{proof}

\section{A Bi-Labelled Sequent Calculus} \label{sec:labCS}

In this section, we present our bi-labelled non-wellfounded calculus $\labCS$, and show that it is sound and complete with respect to the logic $\CS$.

\subsection{The Calculus $\labCS$}

The calculus we present is based on the relational semantics introduced in the previous section, using labels. Labelled systems can be traced back to Kanger \cite{Kanger57}, who used them to introduce the first cut-free system for the modal logic $\mathsf{S5}$. However, only much later have labelled systems been applied widely to non-classical logics \cite{Gabbay_1996,Negri05,Simpson1994,Vigano2000}. 

The second proof theoretic tool we use are \emph{non-wellfounded proofs}. Those are a generalisation of cyclic proofs which have been introduced for the modal $\mu$-calculus \cite{Niwiński96} and later also adapted for other logics, such as propositional dynamic logic \cite{DochertyRowe_2019} or logics with inductive definitions \cite{Brotherston05,Simpson_2017}. 
While wellfounded proofs are often seen as top-down inductively definable trees, non-wellfounded proofs are defined as coinductive trees, allowing for infinitary branches. 
Our system is inspired by the labelled, non-wellfounded systems for $\GL$ and $\mathsf{IGL}$ (an intuitionistic version of $\GL$ with explicit $\Dmnd$) from \cite{Das_et_al24}.

For the language of our calculus, we use a countable set of symbols, also called \textit{labels} $\Var=\{x,y,z,...\}$ and two additional symbols $R$ and $S$. A \textit{labelled formula} is then formed by $x:A$ with $x \in \Var$ and $A \in \LaBi$, and a relational labelled formula (often called \textit{relational atom}) is of the form $xRy$ or $xSy$ with $x,y \in \Var$. A \textit{labelled sequent} $\R , \Gamma \Ra \Omega$ consists of a finite multiset of relational atoms $\R$ and two finite multisets of simple labelled formulas $\Gamma $ and~$\Omega$. As usual, we use a comma to indicate multiset union, meaning that $\Gamma, \Omega$ is the (disjoint) union of the multisets $\Gamma$ and $ \Omega$. 
We will also write $\Gamma , x :A$ (or $\R, xRy$) to mean the union of $\Gamma$ with the singleton multiset $\{x:A\}$ (respectively the union of $\R$ and $\{xRy\}$).
We write $\Seq$ to vary over labelled sequents, and we refer to the set of labels occurring in a sequent $\Seq$ as $Lab(\Seq)$.  

\begin{figure}[t]
    \begin{mdframed}
    \begin{center} \vspace{3mm}
        \begin{prooftree}
            \hypo{}
            \infer[left  label=Id ]1[]{\R , \Gamma , x : p \Ra x : p , \Omega}
        \end{prooftree} \hspace{6mm}
        \begin{prooftree}
            \hypo{}
            \infer[left  label=$\bot$ ]1[]{\R , \Gamma , x : \bot \Ra \Omega}
        \end{prooftree} \\[4mm]
        \begin{prooftree}
            \hypo{\R , \Gamma, x: A \Ra x : B , \Omega}
            \infer[left  label=$\to$R ]1[]{\R , \Gamma \Ra x : A \to B , \Omega}
        \end{prooftree} \hspace{6mm}
        \begin{prooftree}
            \hypo{\R , \Gamma \Ra x : A, \Omega}
            \hypo{\R , \Gamma, x: B \Ra \Omega}
            \infer[left  label=$\to$L ]2[]{\R , \Gamma, x: A \to B \Ra \Omega}
        \end{prooftree} \\[4mm]
        \begin{prooftree}
            \hypo{\R, x R y , \Gamma \Ra y : A , \Omega}
            \infer[left  label=$\Box$R ]1[$y!$ ]{\R, \Gamma \Ra x : \Box A , \Omega}
        \end{prooftree} \hspace{6mm}
        \begin{prooftree}
            \hypo{\R, x R y , x: \Box A , y : A, \Gamma \Ra \Omega}
            \infer[left  label=$\Box$L ]1[]{\R, x R y , x : \Box A , \Gamma \Ra \Omega}
        \end{prooftree} \\[4mm]
        \begin{prooftree}
            \hypo{\R, x S y , \Gamma \Ra y : A , \Omega}
            \infer[left  label=$\triangle$R]1[$y!$]{\R, \Gamma \Ra x : \triangle A , \Omega}
        \end{prooftree} \hspace{6mm}
        \begin{prooftree}
            \hypo{\R, x S y , x: \triangle A , y : A , \Gamma \Ra \Omega}
            \infer[left  label=$\triangle$L ]1[]{\R, x S y , x : \triangle A ,  \Gamma \Ra \Omega}
        \end{prooftree} \\[4mm]
        \begin{prooftree}
            \hypo{\R, x \circ y , y \bullet z , x  \bullet z, \Gamma \Ra \Omega}
            \infer[left  label=$\transBi$ ]1[($\circ , \bullet \in \{R,S\}$)]{\R, x \circ y , y \bullet z , \Gamma \Ra \Omega}
        \end{prooftree}
    \end{center} 
    \end{mdframed}
    \caption{The rules of the system $\labCS$}
    \label{fig:RulesLabCS}
\end{figure}

The rules of the system $\labCS$ are given in figure \ref{fig:RulesLabCS}. Intuitively, an atom $xRy$ corresponds semantically to a relation $x \prec y$ in a Carlson model such that $y \in M_0$, while $xSy$ corresponds to the same where $y \in M_1$ (see also definition \ref{def:SequInterpret}); this is similar to the semantics introduced in \cite{Visser95}. 
Notice that the rule $\transBi$ constitutes actually four different rules, all together corresponding to the transitivity of $\prec$. 
One could also define a labelled system for $\CS$ by considering labelled formulas of the form $x \prec y$, $x \in M_0$, and $x \in M_1$ instead of $xRy$ and $xSy$. While this would reduce the amount of rules we would have to deal with, we would add a lot of structure which is more difficult to handle when controlling proofs; for example, in defining proof search strategies.
The rules $\Box$R and $\triangle$R have the additional constraint, signalled by $y!$, that the new label $y$ (read bottom up) should be \emph{fresh}. This means that $y$ is not allowed to occur in the conclusion of the rule.  
The sequent $\R , \Gamma \Ra \Omega$, which always occurs as a subsequent in all of the rules, is called the \emph{context} of that rule, whereas the other formulas in the sequents are either called \emph{principal} (if they are in the conclusion of the rule) or \emph{auxiliary} (if they occur in one of the premisses).
Besides the rules themselves, it is a crucial feature to allow for infinitely long branches inside a derivation, which are constrained by a progress condition. 

\begin{definition}[Pre-Proof, Trace, Progress, Proof]
     A \emph{pre-proof} is a possibly infinite tree generated from rules of $\labCS$.

    A \emph{trace} along a branch $(\Seq_i)_{i<\omega}$ (with $\Seq_i = \R_i , \Gamma_i \Ra \Omega_i$) of a pre-proof is a sequence of labels $(x_i)_{i < \omega}$ s.t.~for all $i<\omega$: either 1.~$x_{i+1} = x_i$; or 2.~${x_i R x_{i+1} \in \R_i}$; or 3.~$x_i S x_{i+1} \in \R_i$. 
    
    An infinite trace is \emph{progressing} if it is not eventually constant (i.e.~ conditions 2.~and 3.~above apply together infinitely often). A branch $(\Seq_i)_{i<\omega}$ in a pre-proof is progressing if it has a progressing trace.
    
    A \emph{proof} in the system $\labCS$ is then given by a pre-proof where each branch is either finite or progressing. We write $\labCS \vdash \Seq$ if there is a proof with $\Seq$ at the root. A formula $A$ is derivable in $\labCS$, written ${\labCS \vdash A}$, if $\labCS \vdash \Ra x : A$ (for any $x \in \Var$). 
\end{definition}

\begin{figure}[t]
    \centering 
    \begin{adjustbox}{max width=\textwidth}
        \begin{prooftree}
        \hypo{\vdots}
        \infer1{\R, zRu, xRu, x: \Box ( \Box p \to p ) \Ra z : p, u : p}
        \infer[left label=trans$_{R R}$]1{\R, zRu, x: \Box ( \Box p \to p ) \Ra z : p, u : p}
        \infer[left label=$\Box$R]1{\R, x: \Box ( \Box p \to p ) \Ra z : p, z : \Box p}
        \hypo{}
        \infer[left label=Id]1{\R, x: \Box ( \Box p \to p ), z : p \Ra z : p}
        \infer[left label=$\to$L]2{xSy,yRz, xRz, x: \Box ( \Box p \to p ), z : \Box p \to p \Ra z : p}
        \infer[left label=$\Box$L]1{xSy,yRz, xRz, x: \Box ( \Box p \to p ) \Ra z : p}
        \infer[left label=trans$_{S R}$]1{xSy,yRz, x: \Box ( \Box p \to p ) \Ra z : p}
        \infer[left label=$\Box$R]1{xSy, x: \Box ( \Box p \to p ) \Ra y : \Box p}
        \infer[left label=$\triangle$R]1{x: \Box ( \Box p \to p ) \Ra x : \triangle \Box p}
        \infer[left label=$\to$R]1{\Ra x : \Box ( \Box p \to p ) \to \triangle \Box p}
    \end{prooftree}
    \end{adjustbox}
    \caption{A proof in $\labCS$ with $\R = xSy,yRz, xRz$.}
    \label{fig:CSproof}
\end{figure}

\begin{example} \label{ex:CSproof}
    Consider as an example a proof in $\labCS$, given in figure \ref{fig:CSproof}. Observe that the leftmost branch of this proof tree has to be infinite. However, we can find a progressing trace, starting with $x,y,z,u,...$ (leaving out the constant steps).
\end{example}

One way to read the progress condition is to consider countermodel extractions. A desired feature of a calculus is the ability to read out countermodels from a failed proof. This is made even easier by incorporating labels. A countermodels can usually be read out from a single branch that is failed. This is either the case by that branch getting stuck (i.e.~no rules can be applied any more) or by that branch continuing indefinitely. Considering example \ref{ex:CSproof}, we could read out a countermodel from the leftmost branch. By the progress condition however, this model will not be a model of $\CS$. therefore, we want to include such branches for our system. This will be made more explicit in the following subsections (theorems \ref{thm:soundCS} and \ref{thm:complCS}).

In non-wellfounded proof theory, there is usually a special focus on \emph{regular} proofs, which can be written as finite graphs possibly containing cycles. 
In this work, regular proofs do not play a role as we do not consider effectiveness. Instead, we focus on logical and proof-theoretic aspects, such as showing soundness and completeness.

\subsection{Soundness}

We will prove both soundness and completeness with respect to Carlson semantics. To do so, we generalise the semantic truth of formulas to sequents.

\begin{definition}[Sequent interpretation] \label{def:SequInterpret}
    Let $\Seq = \R , \Gamma \Ra \Omega$ be a sequent and $\Mw = \la W , \prec, M_0 , M_1 , V \ra$ a Carlson model. 
    Let $I : Lab(\Seq) \to W$ be a function from labels of the sequent to points in the model.
    We call $I$ a \emph{sequent interpretation} of $\Seq$ on $\Mw$ if the following condition is fulfilled: If $xRy \in \R$ then $I(x) \prec I(y)$ and $I(y) \in M_0$, and if $xSy \in \R$ then $I(x) \prec I(y)$ and $I(y) \in M_1$.
\end{definition}

Intuitively, a sequent interpretation maps every label to a point in a Carlson model such that the meaning of the relational atoms match their interpretation in the Carlson model. 
From here on, we refer to sequent interpretations as interpretations, which we use to determine whether a sequent holds in a model.

\begin{definition}[Semantic Truth for Sequents]
     Let $\Mw = \la W , \prec, M_0 , M_1 , V \ra$ be a model. We say that a sequent $\Seq = \R , \Gamma \Ra \Omega$ is true on $\Mw$ (written $\Mw \Vdash \Seq$) if the following holds for all interpretations $I $ of $\Seq$ on $\Mw$:
     If $\Mw , I(x) \Vdash A$ for all $x : A \in \Gamma$, then there is some $y : B \in \Omega$ with $\Mw, I(y) \Vdash B$.
\end{definition}

Similarly to the validity of formulas, we write $\CS \Vdash \Seq$ if the sequent $\Seq$ is valid on all Carlson models, and $\CS \nVdash \Seq$ if it is not. 
We now proceed to prove soundness of $\labCS$ using the following crucial local soundness lemma. This lemma essentially tells us that truth is always preserved in a rule application, both upwards and downwards. Note that downwards preservation of truth always holds for sound sequent systems.

\begin{lemma}[Local Soundness] \label{lem:FalsityCS}
    Let $\Mw$ be a Carlson model and let
    \[
    \frac{\Seq_1 ... \Seq_n}{\Seq}
    \]
    be a rule of $\labCS$ with at least one premiss, then: $\Mw \Vdash \Seq$ iff $\Mw \Vdash \Seq_i$ for all $1 \leq i \leq n$.
\end{lemma}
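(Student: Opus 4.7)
My plan is to prove the lemma by case analysis on the rule, in each case splitting the biconditional into a downward direction (premise(s) true implies conclusion true, the usual soundness direction) and an upward direction (conclusion true implies each premise true, a local invertibility statement). The unifying idea is that the relational atoms in $\R$ act as semantic preconditions on interpretations, so for each rule I describe how interpretations of the conclusion relate to interpretations of the premise(s) and then read off the equivalence from the appropriate truth clause.

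For the rules that preserve the set of labels and relational atoms between conclusion and premise(s), this amounts to bookkeeping. In $\to$R, $\to$L, $\Box$L and $\triangle$L, every interpretation of the conclusion is automatically an interpretation of the premise(s) and vice versa, so the iff reduces directly to the semantic clause for the principal connective. For instance, in $\Box$L the added antecedent formula $y:A$ is forced to be true at $I$ whenever $x:\Box A$ is true at $I$ and the atom $xRy$ is interpreted correctly, by the clause for $\Box$; conversely, the antecedent of the premise is a superset of the antecedent of the conclusion, so truth propagates up trivially. The rule $\transBi$ uses the transitivity of $\prec$ in Carlson models: given the atoms $x \circ y$ and $y \bullet z$ in $\R$, the added atom $x \bullet z$ is interpreted by $I(x) \prec I(z)$ (by transitivity) together with $I(z) \in M_\bullet$ (inherited from $y \bullet z$), so interpretations on both sides coincide and the iff is immediate.

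The genuinely substantive cases are the right modal rules $\Box$R and $\triangle$R, whose freshness side-condition introduces a label $y$ not present in the conclusion. The upward direction is straightforward: given an interpretation $I$ of the premise, its restriction to the labels of the conclusion is an interpretation of the conclusion, and the semantic clause for $\Box$ (resp.\ $\triangle$) together with the atom $xRy$ (resp.\ $xSy$) transfers truth of $x:\Box A$ at $I(x)$ to truth of $y:A$ at $I(y)$. The main obstacle, and the place where freshness is really used, is the downward direction, which I establish by contraposition. Starting from an interpretation $I$ of the conclusion whose antecedent is satisfied but whose succedent is not, I extract from the failure of $x:\Box A$ a witness $w$ with $I(x) \prec w$, $w \in M_0$ and $\Mw, w \nVdash A$, and extend $I$ by setting $I(y) := w$. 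Freshness of $y$ in the conclusion guarantees that this extension is a well-defined interpretation of the premise with the atom $xRy$ correctly interpreted, and by construction it falsifies every succedent formula of the premise (both $y:A$ and the formulas of $\Omega$, the latter because they are unchanged by the extension). This contradicts truth of the premise. The $\triangle$R case is symmetric with $M_1$ in place of $M_0$.
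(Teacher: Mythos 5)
Your proof is correct and takes essentially the same approach as the paper: a case analysis in which the non-label-introducing rules reduce to bookkeeping because interpretations of conclusion and premise(s) coincide (using transitivity of $\prec$ for $\transBi$), while $\Box$R and $\triangle$R are handled by contraposition, extending the interpretation of the conclusion to the fresh label via the semantic witness for the failure of $x:\Box A$ (resp.\ $x:\triangle A$). The only cosmetic difference is that the paper phrases every case contrapositively, whereas you argue some cases directly in terms of truth.
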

\begin{proof}
    Generally, the proof is a simple inspection on the structure of the rules and the semantic meaning of the connectives as well as the conditions on Carlson models. We prove by contraposition and consider the following key cases.

    $\Box$R: Let $\Mw = \la W , \prec , M_0 , M_1 , V \ra$ be a model with  ${\Mw \nVdash \R , \Gamma \Ra x : \Box A , \Omega}$. 
    We can then find an interpretation $I$, for which the following holds: $\Mw , I(z) \nVdash B$ for all $z : B \in \Omega \cup \{x : \Box A\}$, and $\Mw , I(z) \Vdash B$ for all $z : B \in \Gamma$. 
    By $\Mw , I(x) \nVdash \Box A$, we must find some point $\tilde{y} \in M_0 \subseteq W$ with $I(x) \prec \tilde{y}$ and $\Mw , \tilde{y} \nVdash A$. 
    Extend $I$ to $I^\prime$ by adding some fresh label $y$ to its domain with $I(y)=\tilde{y}$. Then, $I^\prime$ is an interpretation of $\R , xR y , \Gamma \Ra y : A , \Omega$ over $\Mw$ with: $\Mw , I(z) \nVdash B$ for all $z : B \in \Omega \cup \{y : A\}$, and $\Mw , I(z) \Vdash B$ for all $z : B \in \Gamma$. Therefore $\Mw \nVdash \R , xR y , \Gamma \Ra y : A , \Omega$.\\ 
    See that the same argument works in the other direction as well:
    By assuming $\Mw \nVdash \R , xR y , \Gamma \Ra y : A , \Omega$ there must be an interpretation $I^\prime$ such that $\Mw , I(z) \nVdash B$ for all $z : B \in \Omega \cup \{y : A\}$, and $\Mw , I(z) \Vdash B$ for all $z : B \in \Gamma$. By $I^\prime$ being an interpretation and $x R y$ we have $I^\prime (x) \prec I^\prime (y)$, therefore $\Mw , I^\prime (x) \nVdash \Box A$. By reducing $I^\prime$ to $I$, we have an interpretation for $\R , \Gamma \Ra x : \Box A , \Omega$ as $y$ was introduced freshly. Thus, $\Mw \nVdash \R , \Gamma \Ra x : \Box A , \Omega$.

    \sloppy
    $\triangle$L: Consider some model $\Mw = \la W , \prec , M_0 , M_1, V \ra$ for which we have ${\Mw \nVdash \R , xS y , \Gamma, x : \triangle A \Ra \Omega}$. Then, we can find an interpretation $I$ with the following properties: $\Mw , I(z) \nVdash B$ for all ${z : B \in \Omega}$, and ${\Mw , I(z) \Vdash B}$ for all ${z : B \in \Gamma \cup \{ x : \triangle A \}}$. By $I$ being an interpretation and $xS y$ being in the relation set of the sequent: $I(x) \prec I(y)$ and $I(y) \in M_1$. Therefore, by $\Mw , I(x) \Vdash \triangle A$ we get $ \Mw , I(y) \Vdash A$. So, $\Mw , I(z) \Vdash B$ for all $z : B \in \Gamma \cup \{x : \triangle A , y : A \}$, which makes $I$ an interpretation witnessing that $\Mw \nVdash \R , xS y , \Gamma ,x : \triangle A , y : A \Ra \Omega$. \\
    The other direction follows directly by observing that a model $\Mw$ with an interpretation $I$ which falsifies $\R , xS y , \Gamma, x : \triangle A , y :A \Ra \Omega$ must also falsify the strictly smaller sequent $\R , xS y , \Gamma, x : \triangle A \Ra \Omega$.

    \sloppy
    $\transBi$: Consider some model $\Mw = \la W , \prec , M_0 , M_1 , V \ra$ where ${\Mw \nVdash \R , x \circ y , y \bullet z , \Gamma \Ra \Omega}$ with $\circ , \bullet \in \{R,S\}$. So, there is an interpretation $I$ with $\Mw , I(z) \nVdash B$ for all $z : B \in \Omega$, and $\Mw , I(z) \Vdash B$ for all $z : B \in \Gamma$. By $I$ being an interpretation and $x \circ y , y \bullet z$ being in the relation set of the sequent, we have $I(x) \prec I(y) \prec I(z)$, with either $I(z) \in M_0$ if $\bullet = R$, or $I(z) \in M_1$ if $\bullet = S$. So, by transitivity of ${\prec}:$ $I(x) \prec I(z)$. Thus, $I$ is also an interpretation of $\R , x \circ y , y \bullet z , x \bullet z ,\Gamma \Ra \Omega$, and it therefore witnesses that $\Mw \nVdash \R ,  x \circ y , y \bullet z , x \bullet z ,\Gamma \Ra \Omega$. \\
    The other direction follows immediately just like in the previous case.
\end{proof}

The following observation follows directly from the proof and will be helpful for our soundness argument.

\begin{corollary} \label{cor:interpretation}
    Given an instance of a rule of $\labCS$, with conclusion $\Seq$ and premisses $\Seq_1 , ... , \Seq_n$: if an interpretation $I$ witnesses that $\Mw \nVdash \Seq$, then there is ab interpretation $I^\prime$ witnessing $\Mw \nVdash \Seq_i$ which is either equal to $I$ or extends $I$.
\end{corollary}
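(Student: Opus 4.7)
The plan is to observe that the corollary follows by a close reading of the proof of Lemma~\ref{lem:FalsityCS}, case by case on which rule is instantiated. In each case, I need to exhibit, from the interpretation $I$ witnessing $\Mw \nVdash \Seq$, an interpretation $I'$ witnessing $\Mw \nVdash \Seq_i$ for each premiss $\Seq_i$, and check that $I'$ either equals $I$ or extends it (i.e.\ agrees with $I$ on $\mathrm{Lab}(\Seq)$ and has a possibly larger domain).

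For all rules whose premiss-labels are a subset of the conclusion-labels, I would simply take $I' = I$. This covers $\to$R, $\to$L (in both premisses, after noting that $A$ and $B$ share the label $x$ already present in the conclusion), $\Box$L, $\triangle$L and $\transBi$, since in the soundness proof above the very same $I$ is used to witness falsity of the premiss. The only rules for which $\mathrm{Lab}(\Seq_i)$ strictly contains $\mathrm{Lab}(\Seq)$ are $\Box$R and $\triangle$R, where a fresh label $y$ is introduced. Here I would follow exactly the construction in the $\Box$R case of Lemma~\ref{lem:FalsityCS}: since $\Mw, I(x) \nVdash \Box A$, pick a witness $\tilde y \in M_0$ with $I(x) \prec \tilde y$ and $\Mw, \tilde y \nVdash A$, and extend $I$ to $I'$ by setting $I'(y) := \tilde y$; this is well-defined precisely because $y$ is fresh, so $I'$ properly extends $I$ and falsifies the premiss. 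The $\triangle$R case is identical with $M_0$ replaced by $M_1$.

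Since the axiom rules Id and $\bot$ have no premisses, they are automatically excluded from the statement (which only quantifies over existing premisses $\Seq_i$), and there is nothing to check for them.

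I expect no real obstacles here: the work has already been done inside the proof of Lemma~\ref{lem:FalsityCS}. The only thing to be careful about is making the relationship between $I$ and $I'$ explicit, namely that in the non-freshness rules the same interpretation is reused verbatim, while for $\Box$R and $\triangle$R the extension is by a single new label mapped to a suitably chosen $\prec$-successor in $M_0$ (resp.\ $M_1$). Once this bookkeeping is noted, the corollary is immediate.
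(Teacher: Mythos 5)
Your proposal is correct and matches the paper, which derives the corollary as an immediate by-product of the case analysis in the proof of Lemma~\ref{lem:FalsityCS}: the same interpretation is reused for the label-preserving rules, and for $\Box$R and $\triangle$R it is extended by mapping the fresh label to the semantic witness. Your spelled-out bookkeeping is exactly what the paper leaves implicit.
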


Notice that the proof does not use the fact that our models are conversely wellfounded. This will be used to prove the next theorem when considering the infinite branches of a proof, which also have not played a role yet.

\begin{theorem}[Soundness of $\labCS$] \label{thm:soundCS}
    For any sequent $\Seq$: If $\labCS \vdash \Seq$ then $\CS \Vdash \Seq$.
\end{theorem}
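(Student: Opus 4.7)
The proof will go by contraposition. Assume $\CS \nVdash \Seq$, so there exist a Carlson model $\Mw$ and an interpretation $I_0$ witnessing $\Mw \nVdash \Seq$; I will show that no pre-proof with root $\Seq$ can meet the progress condition, hence $\labCS \nvdash \Seq$.

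Fix any pre-proof $\pi$ with root $\Seq$. Starting from $\Seq_0 := \Seq$ and $I_0$, I construct recursively an infinite branch $(\Seq_i)_{i<\omega}$ of $\pi$ together with a sequence of extending interpretations $I_0 \subseteq I_1 \subseteq \cdots$ such that each $I_i$ witnesses $\Mw \nVdash \Seq_i$. At each stage the sequent $\Seq_i$ cannot be a leaf of $\pi$: the two axiomatic rules, Id and $\bot$, produce sequents that are true in every model under every interpretation and hence cannot be falsified. Thus the rule applied at $\Seq_i$ has at least one premiss, and Corollary \ref{cor:interpretation} supplies a premiss $\Seq_{i+1}$ together with an interpretation $I_{i+1}$ extending $I_i$ that still falsifies it. Picking one such premiss at every stage yields the required infinite branch.

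To finish, I have to show that this branch admits no progressing trace. Set $I^* := \bigcup_i I_i$, which is a well-defined function on all labels appearing along the branch. Suppose towards a contradiction that $(x_i)_{i<\omega}$ is a progressing trace along $(\Seq_i)_{i<\omega}$. For each $i$, either $x_{i+1} = x_i$, giving $I^*(x_{i+1}) = I^*(x_i)$, or one of $x_i R x_{i+1}, x_i S x_{i+1}$ lies in $\R_i$, giving $I^*(x_i) \prec I^*(x_{i+1})$ because $I_i$ is a valid interpretation of $\Seq_i$. By the progress condition, the strict case occurs infinitely often, so collapsing the maximal runs of constant steps produces an infinite strictly $\prec$-ascending sequence in $W$, contradicting converse wellfoundedness of $\prec$.

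The one conceptually crucial step is this last contradiction: it is the only place where the second-order condition on Carlson models is invoked, and it is precisely the feature that the progress condition on non-wellfounded proofs is designed to internalise. Everything else is a routine descent through the proof tree resting on the local soundness lemma and its corollary.
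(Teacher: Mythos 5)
Your proposal is correct and follows essentially the same argument as the paper: both construct an infinite branch of falsified sequents via the local soundness lemma and Corollary~\ref{cor:interpretation}, take the limit interpretation, and show that a progressing trace would yield an infinite $\prec$-chain contradicting converse wellfoundedness. The only cosmetic difference is that you phrase it as a contraposition over all pre-proofs while the paper argues by contradiction from a given proof, and you fold the paper's case split (infinitely many distinct points vs.\ a repeated point) into the single observation that any infinite $\prec$-increasing sequence already violates converse wellfoundedness.
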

\begin{proof}
    Assume $\labCS \vdash \Seq$ and suppose for a contradiction that $\CS \nVdash \Seq$.
    Then, there is some Carlson model $\Mw = \la W , \prec , M_0 , M_1 , V \ra$ that falsifies $\Seq$, hence $\Mw \nVdash \Seq$. 
    As $\labCS \vdash \Seq$, it must be witnessed by some proof $\tree$, where every branch of $\tree$ is either finite or progressing. We now extract an infinite branch $(\Seq_i)_{i < \omega }$ from $\tree$ which will not be progressing, contradicting the fact that $\tree$ is a proof. We define the branch inductively by $\Seq_0 := \Seq$, and $\Seq_{i+1}$ is a premiss of $\Seq_i$ in $\tree$ such that $\Mw \nVdash \Seq_{i+1}$. By $\Mw \nVdash \Seq$ and local soundness (lemma \ref{lem:FalsityCS}) this branch is well defined. 
    
    Note that $(\Seq_i)_{i < \omega }$ cannot be finite as it would have to end with an initial sequent (either $\R , \Gamma , x : p \Ra x : p , \Omega$ or $\R , \Gamma , x : \bot \Ra \Omega$) as $\tree$ is a proof and an initial sequent cannot be falsified by a model. Thus, $(\Seq_i)_{i < \omega }$ is an infinite branch, which must therefore be progressing (again, because $\tree$ is assumed to be valid). 

    Call $I_i$ the sequent interpretation of $\Seq_i$ which is guaranteed by $\Mw \nVdash \Seq_i$. Note that $I_{i+1}$ can be obtained by possibly just extending $I_{i}$ (corollary \ref{cor:interpretation}), which allows us to define the limit interpretation $I$ for all $\Seq_i$ over $\Mw$.
    
    We show that $\prec$ cannot be conversely wellfounded.
    Consider some label $x_n$ occurring in the progressing trace  $(x_i)_{i < \omega }$ of $(\Seq_i)_{i < \omega }$ together with its corresponding point in the model $I_n(x_n)$. By the trace being progressive we can find some $m > n$ s.t.~ $x_n \circ x_m$ (for $\circ \in \{R,S\}$), thus $I_n(x_n) \prec I_m (x_m)$. 
    Further, this shows that for any label $x$ occurring in the trace, there is some distinct label $x^\prime$ also occurring in the trace such that $I(x) \prec I(x^\prime)$ as all $I_i$ are contained in $I$. 

    We can then find a chain of points $I(x) \prec I(x^\prime) \prec I(x^{\prime\prime}) \prec ...$ with $x=x_1$. Either this chain will have infinitely many distinct elements or it must eventually repeat an element. If it has infinitely many elements, that means $\prec$ has an infinite non-cyclic path. If elements do repeat, there must be some loop in $\prec$.
    In both cases, $\prec$ is not conversely wellfounded, meaning that $\Mw$ is not a Carlson model, a contradiction.
\end{proof}

\subsection{Completeness}

We continue with the proof of completeness. There are generally two ways to establish completeness for a sequent-style calculus. One can either show that the system is complete wrt.~the Hilbert axiomatisation by introducing the cut rule -- which is then shown to be admissible -- or wrt.~semantics, by considering failed proofs and showing that one can read out a countermodel from them. 
We will do the latter here.
As we have a fully invertible calculus, we are able to consider a technique which is relatively simple and has been inspired by an unpublished (personally communicated) completeness proof for the labelled and non-wellfounded system for classical $\GL$ from \cite{Das_et_al24}.

Let us define a proof search procedure which we define in two phases. Dividing the procedure like this makes it easy to show that the strategy is well defined. It also allows us to adapt the strategy to other logics, for example if we have a relational rule that can fire infinitely often such as directedness for the modal logic $\mathsf{KD}$. We divide the phases by considering firstly rules that do not introduce new labels, which we will call \emph{saturation}; secondly, we consider all rules that introduce new labels in a single phase, the \emph{label phase}.
We will show that the whole proof search is well defined by showing that each phase terminates. 
The important thing here is that we do not rely on blind proof search as we want to ensure that every rule that we can apply (without creating duplicates) has to be applied eventually. This will simplify our countermodel construction as we can read out the valuation and the relational structure directly from the propositional and relational atoms.

\begin{definition}[Saturation]
    Let $\R , \Gamma \Ra \Omega$ be a sequent that contains a labelled formula $x:A$. We call $x:A$ \emph{saturated} in $\R , \Gamma \Ra \Omega$ if the following conditions hold based on the form of $A$:
    \begin{itemize}
        \item[-] $A$ is a propositional atom $p \in \Prop$ or $A= \bot$;
        \item[-] $A= A_1 \to A_2$ and $x :A \in \Gamma$, then either $x:A_1 \in \Omega$ or $x: A_2 \in \Gamma$;
        \item[-] $A= A_1 \to A_2$ and $x :A \in \Omega$, then $x:A_1 \in \Gamma$ and $x: A_2 \in \Omega$;
        \item[-] $A = \Box A_1$ and $x :A \in \Gamma$, then for all $y$ with $xRy \in \R$, $y :A_1 \in \Gamma$;
        \item[-] $A = \triangle A_1$ and $x :A \in \Gamma$, then for all $y$ with $xSy \in \R$, $y :A_1 \in \Gamma$.
    \end{itemize}
    A sequent is saturated, if all its formulas are saturated and $\R$ is closed under $\transBi$.
\end{definition}

We call a tree of sequents (finite or infinite) formed by rules of $\labCS$ (possibly containing some leaves that are not closed by axioms) a \emph{derivation}.
Let $\tree$ be a derivation with some open leaves. The \emph{saturation phase} over $\tree$ is then defined by applying the rules $\to$L, $\to$R, $\Box$L, $\triangle$L, and $\transBi$ as long as possible to the open leaves which are not saturated.
It is not important in which order we apply these rules, as the following lemma shows.

\begin{lemma} \label{lem:sat}
    Any saturation phase yields a finite tree whose open leaves are saturated.
\end{lemma}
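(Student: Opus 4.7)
The plan is to produce a finite, strictly decreasing measure on leaves that drops with every non-redundant saturation step, and then conclude by König's lemma (in fact by bounded depth plus bounded branching). Fix an open leaf $\ell$ of the input derivation $\tree$, and let $L(\ell) \subseteq \Var$ be the (finite) set of labels occurring in $\ell$. The first observation, by inspection of the five saturation rules, is that none of $\to$L, $\to$R, $\Box$L, $\triangle$L, $\transBi$ introduces a fresh label; so every sequent above $\ell$ in the saturation tree uses labels from $L(\ell)$. The second observation is that every labelled formula appearing above $\ell$ is of the form $x : A$ where $A$ is a subformula of some formula in $\ell$. Let $S(\ell)$ be the (finite) subformula closure of the formulas in $\ell$.

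Next I would define a measure. Let
\[
N(\ell) \;=\; \bigl(L(\ell) \times S(\ell) \times \{L,R\}\bigr) \;\cup\; \bigl(L(\ell) \times \{R,S\} \times L(\ell)\bigr),
\]
thinking of the first component as potential labelled formulas with a polarity (antecedent or succedent) and the second component as potential relational atoms. For any sequent $\Seq = \R, \Gamma \Ra \Omega$ above $\ell$, let $\mathrm{occ}(\Seq) \subseteq N(\ell)$ be the set of elements of $N(\ell)$ actually occurring in $\Seq$, and define $M(\Seq) := |N(\ell)| - |\mathrm{occ}(\Seq)|$. Since $N(\ell)$ is finite, $M(\Seq) \in \mathbb{N}$.

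The core claim is that every application of a saturation rule to an \emph{unsaturated} formula (or to a pair of relational atoms whose transitive consequence is missing) strictly decreases $M$ on every premiss. I would verify this by a small case analysis: $\to$R on an unsaturated $x : A_1 \to A_2 \in \Omega$ adds $x : A_1$ to the antecedent or $x : A_2$ to the succedent (at least one of which is new, else the formula was saturated); each branch of $\to$L on an unsaturated $x : A_1 \to A_2 \in \Gamma$ similarly introduces a new element; $\Box$L applied to $x : \Box A$ with $xRy \in \R$ and $y : A \notin \Gamma$ adds $y : A$ to $\Gamma$; $\triangle$L is symmetric; and $\transBi$ applied to $x \circ y, y \bullet z$ with $x \bullet z \notin \R$ adds $x \bullet z$. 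Hence each step of the saturation strategy either increases $\mathrm{occ}$ strictly (so $M$ drops by at least one) on each premiss, or it was redundant and the strategy does not take it.

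From this, every branch of the saturation tree above $\ell$ has length at most $M(\ell)$, and since the only branching rule is $\to$L, which is binary, the saturation subtree has at most $2^{M(\ell)}$ nodes, hence is finite. As $\tree$ already has finitely many open leaves, the overall tree after the saturation phase is finite. When the strategy halts at a leaf, no saturation rule is applicable to any unsaturated formula or to a pair of relational atoms violating transitivity, which is precisely the definition of a saturated leaf.

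The main subtlety I expect is the correct bookkeeping in the measure: one must treat labelled formulas with their polarity (left vs. right) so that $\to$R, which adds a formula on each side, is counted coherently, and one must ensure that any ``redundant'' application is simply excluded by the strategy, so that progress is genuine. A secondary, purely cosmetic obstacle is that the calculus uses multisets, so the measure must be stated in terms of occurrence in the underlying set $N(\ell)$ rather than multiset cardinalities; otherwise a rule could in principle add a duplicate labelled formula without reducing the measure, which the strategy must also forbid.
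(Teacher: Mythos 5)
Your proof is correct and takes essentially the same route as the paper: the label set is fixed during saturation and every added formula lies in the finite subformula closure of the leaf, so only finitely many non-redundant additions are possible. Your explicit measure is just a more formal packaging of this count, with the minor advantage that it also explicitly accounts for the relational atoms added by $\transBi$, which the paper's proof (speaking only of ``formulas'') leaves implicit.
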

\begin{proof}
    By construction, the open leaves of a saturation phase will be saturated. Let $\Seq$ be some open leaf at the start of the saturation phase.
We show that the saturation phase terminates by showing that only finitely many formulas can be added during saturation. 
Observe that $\Seq $ only contains finitely many formulas and every new formula that gets added to $\Seq $ must be a subformula of another formula in $\Seq $. Therefore, for every label $x$ occurring in $\Seq $ there can only be finitely many new formulas $x:A$ that can be added. As the amount of labels is finite and does not increase, the total amount of formulas that are added in the saturation phase is finite. 
\end{proof}

Next, we define the second phase, the \emph{label phase}, of our proof search as simply applying the rules $\Box$R and $\triangle$R continually as long as possible. 
We call a sequent to which no such rules can be applied \emph{label saturated}.

\begin{lemma}
    Any label phase yields a finite tree whose leaves are label saturated.
\end{lemma}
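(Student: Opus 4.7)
The plan is a direct termination argument via a natural-number measure that strictly decreases with every application of $\Box$R or $\triangle$R, exploiting the fact that both rules have exactly one premiss.

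First, I would associate to any sequent $\Seq = \R , \Gamma \Ra \Omega$ the natural number
\[
\mu(\Seq) := \sum_{x : A \in \Omega} c(A),
\]
where $c(A)$ denotes the total number of occurrences of $\Box$ and $\triangle$ in $A$. Note that $\mu$ depends only on the right-hand multiset of the sequent; in particular, it is insensitive to changes in $\R$ and $\Gamma$.

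Next, I would check that $\mu$ strictly decreases by exactly $1$ with each application of $\Box$R or $\triangle$R. In a step of $\Box$R, the conclusion contains the principal formula $x : \Box A$ on the right (contributing $c(\Box A) = c(A) + 1$), while the premiss replaces it with the labelled subformula $y : A$ (contributing $c(A)$); all other right-hand formulas are unchanged. The case of $\triangle$R is symmetric. Since $\Box$R and $\triangle$R are both single-premiss, the label phase starting from a given open leaf $\Seq_\ell$ extends it along a single branch; by the strict decrease of $\mu$, this branch has length at most $\mu(\Seq_\ell)$ and is therefore finite. By the previous lemma, the saturation phase produces only finitely many open leaves, so extending each by a finite path still yields a finite tree overall. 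When this extension terminates on a branch, neither $\Box$R nor $\triangle$R can be applied, so by definition the terminal sequent is label saturated.

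I do not anticipate a serious obstacle: the only delicate point is to observe that the freshness side condition on $\Box$R and $\triangle$R never blocks a rule application, but since the label set $\Var$ is countably infinite a fresh label is always available, so the label phase proceeds unobstructed until $\mu$ reaches $0$.
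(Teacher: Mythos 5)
Your proposal is correct and is essentially the paper's argument made precise: the paper simply observes that sequents contain finitely many formulas, each with finitely many nested $\Box$ and $\triangle$, and your measure $\mu$ (the total count of modal operators on the right) is exactly the quantity that observation implicitly bounds, strictly decreasing with each single-premiss application of $\Box$R or $\triangle$R. The additional points you verify (that each leaf is extended along a single branch, and that freshness never blocks an application) are sound and only add rigour.
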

\begin{proof}
    The lemma follows directly from the observation that sequents contain only finitely many formulas and any formula can only have a finite amount of nested $\Box$ and $\triangle$.
\end{proof}

Both lemmas show us that the following \emph{proof search strategy} is well defined:
Alternatingly execute saturation phases and label phases while applying Id and $\bot$ eagerly (i.e.~as soon as possible). 

We will use this proof search strategy to obtain a countermodel from an unprovable sequent. 
A sequent which is saturated and label saturated and to which neither Id nor $\bot$ can be upwards applied is called \emph{fully saturated}. Note that any open leaves which our proof search strategy creates must be fully saturated. Let us show in the following lemma how to extract a countermodel from such a sequent.

\begin{lemma} \label{lem:fulsat}
    If $\Seq$ is a fully saturated sequent, then there is a countermodel ${\Mw \nVdash \Seq}$. 
\end{lemma}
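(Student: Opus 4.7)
The plan is to build the countermodel $\Mw = \la W, \prec, M_0, M_1, V \ra$ directly out of the syntactic data of $\Seq = \R, \Gamma \Ra \Omega$. I would set $W := Lab(\Seq)$, declare $x \prec y$ iff $xRy \in \R$ or $xSy \in \R$, put $M_0 := \{y : \exists x.\, xRy \in \R\}$ and $M_1 := \{y : \exists x.\, xSy \in \R\}$, and define $V(p) := \{x : x:p \in \Gamma\}$. The identity map $I(x) = x$ is then automatically a valid sequent interpretation of $\Seq$ on $\Mw$, since $xRy \in \R$ forces both $I(x) \prec I(y)$ and $I(y) \in M_0$, and symmetrically for $S$.

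To see that $\Mw$ is a Carlson model, transitivity of $\prec$ is immediate from the $\transBi$-closure built into saturation (all four $\circ, \bullet$ combinations are covered). Since $\Seq$ is finite, $W$ is finite, so converse wellfoundedness reduces to acyclicity of $\prec$. Acyclicity follows from the proof-search provenance of $\Seq$: every fresh label is introduced by $\Box$R or $\triangle$R as a successor of an already-present label, and $\transBi$ merely closes existing edges under composition, so ordering labels by their introduction stage gives a strict partial order extending $\prec$. The same structural analysis yields the key \emph{disjoint-incoming-type} invariant: every $y$ has all of its incoming $R/S$-edges of a single type, namely the type of the edge with which $y$ was freshly introduced, because $\transBi$ propagates the type of its second leg to the added edge $x \bullet z$.

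The heart of the argument is a truth lemma proved by induction on $A$: $x:A \in \Gamma$ implies $\Mw, x \Vdash A$, and $x:A \in \Omega$ implies $\Mw, x \nVdash A$. Atomic cases are immediate from the definition of $V$ together with non-applicability of Id; $x:\bot \in \Gamma$ is ruled out by the $\bot$ axiom, and $x:\bot \in \Omega$ is trivial. Implicational cases fall directly out of the saturation clauses for $\to$L and $\to$R combined with the IH. For $x:\Box B \in \Gamma$, given any $y$ with $x \prec y$ and $y \in M_0$, the disjoint-incoming-type invariant forces $xRy \in \R$, so saturation's $\Box$L clause delivers $y:B \in \Gamma$ and the IH closes the case; for $x:\Box B \in \Omega$, label saturation supplies a witness $y$ with $xRy \in \R$ and $y:B \in \Omega$, yielding $\Mw, y \nVdash B$ by IH and hence $\Mw, x \nVdash \Box B$. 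The $\triangle$-cases mirror the $\Box$-cases. Combining the truth lemma with the fact that $I$ is a valid interpretation gives $\Mw \nVdash \Seq$.

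The main obstacle I expect is justifying the two structural properties used above --- acyclicity of $\prec$ and the disjoint-incoming-type invariant --- since neither is intrinsic to an arbitrary fully saturated sequent but depends on the provenance of $\Seq$ as a leaf of the proof-search strategy. The cleanest route is to track, either explicitly as part of fully saturated or implicitly via the context of the completeness proof, the order in which labels were introduced; the $\Box B \in \Gamma$ (resp.\ $\triangle B \in \Gamma$) clause of the truth lemma genuinely needs this bookkeeping in order to convert a $\prec$-successor in $M_0$ (resp.\ $M_1$) into an $R$-successor (resp.\ $S$-successor) to which the corresponding saturation clause applies.
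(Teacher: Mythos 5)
Your construction is exactly the paper's: the same $W$, $\prec$, $M_0$, $M_1$, $V$ read off the sequent, the identity interpretation, and a truth lemma by induction on formula complexity. The interesting difference is that you are more careful than the paper at precisely the point where its argument is thinnest. In the case $x:\Box A_1 \in \Gamma$ the paper silently passes from ``$x \prec z$ and $z \in M_0$'' to ``$xRz \in \R$'', which is not a consequence of full saturation alone: a fully saturated sequent could contain $xSy$ and $zRy$ simultaneously, so that $y \in M_0$ and $x \prec y$ without $xRy \in \R$, and then the $\Box$L saturation clause gives nothing. Your \emph{disjoint-incoming-type} invariant (every label's incoming edges all have the type with which it was freshly introduced, preserved by $\transBi$ since the added edge copies the type of the second leg) is exactly what is needed to close this gap, and you are right that it, like acyclicity, is a provenance property of leaves of the proof search rather than of arbitrary fully saturated sequents --- so the lemma really should either build this into the definition of full saturation or be stated for the sequents the strategy produces. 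One small discrepancy: for $x:\Box A_1 \in \Omega$ you invoke a witness $y$ with $xRy$ and $y:A_1 \in \Omega$, but since $\Box$R as formulated consumes its principal formula and is always applicable to a boxed succedent formula, a label-saturated sequent simply contains no such formula; the paper correctly dismisses this case as impossible rather than discharging it with a witness. This does not affect the correctness of your argument.
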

\begin{proof}
    Let us first consider a fully saturated sequent $\R , \Gamma \Ra \Omega$.
    Define a model $\Mw = \la W , \prec , M_0 , M_1 , V \ra$ as follows: 
    \begin{itemize}
        \item[] $W := Lab ( \R , \Gamma \Ra \Omega)$;
        \item[] $M_{0} := \{ x \in W \spa \text{there is some } y \text{ such that } yRx \in \R \}$;
        \item[] $M_{1}:= \{ x \in W \spa \text{there is some } y \text{ such that } ySx \in \R \}$;
        \item[] ${\prec} := \{ (x,y) \in W \times W \spa xRy \in \R \text{ or } xSy \in \R \}$;
        \item[] $V(p) := \{ x \in W \spa x : p \in \Gamma \}$ for any $p \in \Prop$.
    \end{itemize}
    First, observe that $\la W , \prec, V \ra$ is a $\GL$-model, as $\R$ constitutes a finite and transitive tree by closure under $\transBi$, hence $\Mw$ is also a Carlson model. 
    
    Claim: $\Mw$ satisfies all formulas from $\Gamma$ and falsifies all formulas from $\Omega$ under the identity interpretation  $I(x)=x$ . 
    Consider any $x:A$ occurring in the sequent. We prove the claim by simultaneous induction on the complexity of $A$. 
    
    $A \in \Prop$: If $x:A \in \Gamma $, the claim follows by definition. If $x:A \in \Omega$, it follows from the fact that the sequent is saturated, so not an identity sequent. 
    
    $A=\bot$: We know that $x:A \in \Gamma$ cannot happen because of full saturation, whereas if $x:A \in \Omega$ the formula is falsified by definition.
    
    $A= A_1 \to A_2$: If $x:A \in \Gamma$, then by full saturation either $x: A_1 \in \Omega$ or $x:A_2 \in \Gamma$. So, by induction hypothesis $\Mw , x \nVdash A_1$ or $\Mw , x \Vdash A_2$, thus in both cases $\Mw , x\Vdash A_1 \to A_2$. If $x:A \in \Omega$, then $A_1 \in \Gamma$ and $A_2 \in \Omega$. By induction hypothesis $\Mw , y \Vdash A_1$ and $\Mw ,y \nVdash A_2$, so $\Mw ,y \nVdash A_1 \to A_2$.
    
    $A = \Box A_1$: If $x:A \in \Gamma$, then $z : A_1 \in \Gamma$ for any $z$ with $xR z \in \R$. $\Mw, x \Vdash \Box A_1$ holds vacuously if no such $z$ exists. By induction hypothesis, $\Mw , z \Vdash A_1$ and by the definition of $\prec$ and the set $M_0$: $\Mw , x \Vdash \Box A_1$. The case for $x:A \in \Omega$ cannot occur due to label saturation. 
    
    The case of $A = \triangle A_1$ works similarly to the previous one. 
\end{proof} 

Now, we apply our proof search to show how to create a countermodel from a failed proof that is obtained by the previously introduced strategy. 

\begin{theorem}[Completeness of $\labCS$] \label{thm:complCS}
    For any sequent $\Seq$: If $\CS \Vdash \Seq$ then $\labCS$ $\vdash \Seq$. Furthermore, if a sequent is not provable in $\labCS$, there is a countermodel $\Mw \nVdash \Seq$.
\end{theorem}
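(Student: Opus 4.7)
The plan is to prove the \emph{furthermore} clause directly; the main implication then follows by contraposition together with Theorem~\ref{thm:soundCS}. Suppose $\labCS \nvdash \Seq$. I would first apply the proof search strategy above to $\Seq$, obtaining a coinductive derivation $\tree$ which is well-defined because each saturation phase and each label phase terminates. Since $\tree$ is not a proof, one of two things must occur: either (a) $\tree$ contains a finite branch ending in an open leaf, which by fairness of the strategy must be fully saturated; or (b) $\tree$ contains an infinite branch $B = (\Seq_i)_{i<\omega}$ that fails to be progressing.

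Case (a) is dispatched immediately by Lemma~\ref{lem:fulsat}. For case (b), write $\Seq_i = \R_i, \Gamma_i \Ra \Omega_i$ and build a countermodel from the limit of $B$: set $W := \bigcup_i Lab(\Seq_i)$ and $\R_\infty := \bigcup_i \R_i$, then put $\prec\,:=\{(x,y) : xRy \in \R_\infty \text{ or } xSy \in \R_\infty\}$, $M_0 := \{y : \exists x.\, xRy \in \R_\infty\}$, $M_1$ analogously for $S$, and $V(p) := \{x : x{:}p \in \Gamma_i \text{ for some } i\}$. Since atomic and relational formulas are never principal in any rule, these unions behave monotonically. Transitivity of $\prec$ follows from eager application of $\transBi$ in every saturation phase. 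Converse wellfoundedness of $\prec$ is where the non-progressing hypothesis on $B$ enters: any infinite chain $x_0 \prec x_1 \prec \cdots$ in $\Mw$ is witnessed by relational atoms which first appear at stages $j_0, j_1, \dots$ along $B$, and by padding with constant steps and stepping through these atoms one at a time, we assemble a trace on $B$ that makes progress infinitely often, contradicting non-progression.

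It remains to prove the \emph{truth lemma} for $B$: every $x{:}A$ that ever occurs in some $\Gamma_i$ is satisfied at $x$ in $\Mw$, and every $x{:}A$ that ever occurs in some $\Omega_i$ is falsified. I would proceed by induction on the complexity of $A$. Atoms and $\bot$ are handled by the observation that Id and $\bot$ fire eagerly, so no clash can arise along the infinite branch $B$. Implication, $\Box$-left and $\triangle$-left follow from the fact that saturation eventually fires on any non-saturated formula, and $B$ picks a definite premise at each branching rule. The main obstacle is the $\Box$-right case (symmetrically $\triangle$-right): if $x{:}\Box A \in \Omega_i$, fairness of the label phase produces a later stage $j$ at which $\Box$R fires on this formula, introducing a fresh $y$ with $xRy \in \R_{j+1}$ and $y{:}A \in \Omega_{j+1}$; the inductive hypothesis then yields $\Mw, y \nVdash A$, and since $x \prec y$ with $y \in M_0$ we conclude $\Mw, x \nVdash \Box A$. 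Applied at the root $\Seq_0 = \Seq$, this gives $\Mw \nVdash \Seq$, completing the proof of completeness.
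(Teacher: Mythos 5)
Your proposal follows essentially the same route as the paper: the contrapositive via the two-phase proof search, Lemma~\ref{lem:fulsat} for a fully saturated open leaf, and for an infinite non-progressing branch the limit model whose converse wellfoundedness is extracted from the failure of progression, followed by a truth lemma proved by induction on formula complexity. The only detail worth adding is that in case (a) Lemma~\ref{lem:fulsat} falsifies the \emph{leaf}, not the root, so you still need the downward direction of local soundness (Lemma~\ref{lem:FalsityCS}) to conclude $\Mw \nVdash \Seq$, exactly as the paper does.
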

\begin{proof}    
    We prove the contrapositive.
    Assume $\labCS \nvdash \Seq$. Perform the proof search described above on $\Seq$ to obtain a derivation $\tree$. Consider some branch $(\Seq_i)_{i < \omega}$ in $\tree$ such that it either ends with a fully saturated sequent, or it is infinite but not progressing. Such a branch must exist by $\labCS \nvdash \Seq$ as we otherwise would have shown the opposite. 
    Write $\Seq_i = \R_i , \Gamma_i \Ra \Omega_i$ and define a sequence of models $(\Mw_i)_{i < \omega}$ with $\Mw_i = \la W_i , M_{0 \, i}, M_{1 \, i}, \prec_i, V_i \ra$ as follows.
    \begin{itemize}
        \item[] $W_i := Lab(\Seq_i)$
        \item[] $M_{0 \, i} := \{ x \in W_i \spa \text{there is some } y \text{ such that } yRx \in \R_i \}$
        \item[] $M_{1 \, i}:= \{ x \in W_i \spa \text{there is some } y \text{ such that } ySx \in \R_i \}$
        \item[] ${\prec_i} := \{ (x,y) \in W_i \times W_i \spa xRy \in \R_i \text{ or } xSy \in \R_i \}$
        \item[] $V_i(p) := \{ x \in W_i \spa x : p \in \Gamma \}$
    \end{itemize}
    
    If $(\Seq_i)_{i < \omega}$ is finite and ends with a fully saturated sequent $\Seq_n$, by lemma \ref{lem:fulsat} $\Mw_n \nVdash \Seq_n$. Then, by lemma \ref{lem:FalsityCS} also $\Mw_n \nVdash \Seq$.
    
    If $(\Seq_n)_{n < \omega}$ is infinite, then any potential trace in $(\Seq_n)_{n < \omega}$ must be eventually constant. Consider the limit model $\Mw = \la W , \prec , M_0 , M_1 , V \ra$ with $W : =\bigcup_{i < \omega} W_i$, ${\prec} : =\bigcup_{i < \omega} \prec_i$, $M_0 : =\bigcup_{i < \omega} M_{0 \, i}$, $M_1 : =\bigcup_{i < \omega} M_{1 \, i}$, and $V(p) : =\bigcup_{i < \omega} V_i (p)$. 
    By every trace of $(\Seq_n)_{n < \omega}$ being eventually constant, $\prec$ must be conversely wellfounded, otherwise an infinite $\prec$-chain would allow to obtain a progressing trace. (We do not consider the case for $\prec$ having a cycle as it must form a tree if the root of the derivation is cycle free.) Therefore, $\Mw$ must be a Carlson model. We now show that for any $\Seq_i$ in the branch we have $\Mw \nVdash \Seq_i$. We do this similarly to the proof of lemma \ref{lem:fulsat} by induction over the complexity of formulas occurring in $\Seq_i$, for all $\Seq_i$ simultaneously. 
    
    Consider any sequent $\Seq_n$ in the branch and a formula $A$ such that $x:A$  occurs in $\Seq_n$. It then suffices to show that $\Mw, x \Vdash A$ if $x:A \in \Gamma_n$ and $\Mw , y \nVdash B$ if $x:A \in \Omega_n$.

    $A \in \Prop$: If $x:A \in \Gamma $, the claim follows by definition. If $x:A \in \Omega$, it follows from the fact that the branch did not close during the proof strategy, thus Id cannot be applied.

    $A=\bot$: We know that $x:A \in \Gamma$ cannot happen because then the rule $\bot$ could have been applied, making the branch finite in the proof search strategy. If $x:A \in \Omega$, the formula is falsified by definition.

    \sloppy
    $A= A_1 \to A_2$: By the strategy $A$ must be eventually principal. 
    If ${x:A \in \Gamma}$, there is a sequent $\Seq_m$ with ${n < m}$ such that either ${x: A_1 \in \Omega_m}$ or ${x:A_2 \in \Gamma_m}$.
    By induction hypothesis $\Mw , x \nVdash A_1$ or $\Mw , x \Vdash A_2$, thus in both cases ${\Mw , x\Vdash A_1 \to A_2}$. 
    If $x:A \in \Omega$, there is a sequent $\Seq_m$ with $n < m$ such that $A_1 \in \Gamma_m$ and $x: A_2 \in \Omega_m$. By induction hypothesis $\Mw , y \Vdash A_1$ and $\Mw ,y \nVdash A_2$, so $\Mw ,y \nVdash A_1 \to A_2$.
    
    $A = \Box A_1$: If $x:A \in \Gamma$, there must be some $\Seq_m$ with $n < m$ such that $z : A_1 \in \Gamma_m$ for any $z$ with $xR z \in \R_m$. $\Mw, x \Vdash \Box A_1$ holds vacuously if no such $z$ exists. By induction hypothesis, $\Mw , z \Vdash A_1$ and by the definition of $\prec$ and of the set $M_0$: $\Mw , x \Vdash \Box A_1$.
    If $x:A \in \Omega$, there is some $\Seq_m$ with $n < m$ such that $z : A_1 \in \Omega_m$ and $xRz \in \R_m$. By induction hypothesis $\Mw , z \nVdash A_1$, $y \prec z$, and $z \in M_0$. Therefore, $\Mw , y \nVdash \Box A_1$.
    
    The case of $A = \triangle A_1$ works similarly to the previous one. 

    As $\Mw \nVdash \Seq_i$ for all $i < \omega$, we gain as a corollary $\Mw \nVdash \Seq$.  
\end{proof}

This proof also yields directly a decision procedure for the logic $\CS$ as we can apply the strategy to a given sequent $\Ra x:A$ and either obtain a valid proof or extract a countermodel by the procedure described in the proof above.

We remark that we can make the infinite countermodel $\Mw$, which we obtained from a non-progressing branch, into a finite model:
observe that due to the model being infinite but conversely wellfounded we must have some points with infinitely many $\prec$-children; however, we consider only finitely many subformulas. Thus, if we quotient these $\prec$-children into equivalent points (meaning they satisfy the same formulas and are in the same subset of either $M_0$ or $M_1$), we can always obtain a finite set of $\prec$-children for every point without changing the valid formulas of the model. 

\begin{corollary}
    For any formula $A \in \LaBi$: $\CS \vdash A$ iff $\labCS \vdash A$.
\end{corollary}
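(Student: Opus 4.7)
The plan is to chain together the soundness/completeness of the Hilbert system with those of $\labCS$, using Carlson semantics as the pivot. Concretely, I would establish
$$\CS \vdash A \;\Longleftrightarrow\; (\forall \Mw)\, \Mw \Vdash A \;\Longleftrightarrow\; \labCS \vdash A,$$
where the first biconditional is precisely Theorem \ref{thm:soundComplCS}, and the second is what remains to be unpacked from Theorems \ref{thm:soundCS} and \ref{thm:complCS}.

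The only mild gap is reconciling the formula-level statement ``$\labCS \vdash A$'' (which is defined as $\labCS \vdash (\Ra x : A)$ for some/any $x \in \Var$) with the semantic statement ``$\Mw \Vdash A$'' (truth at every point). First I would check that $\CS \Vdash (\Ra x : A)$ is equivalent to $\Mw \Vdash A$ for all Carlson models $\Mw$. Unfolding the sequent interpretation in Definition \ref{def:SequInterpret}, validity of $\Ra x:A$ amounts to the assertion that for every model $\Mw$ and every interpretation $I$ with $I(x) \in W$ there is some $y:B \in \{x:A\}$ with $\Mw , I(y) \Vdash B$, i.e.\ $\Mw , I(x) \Vdash A$. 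Since $I(x)$ ranges over all points of $W$ as $I$ and $\Mw$ vary, this is exactly $\Mw \Vdash A$ for every $\Mw$.

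With that observation in hand the corollary follows immediately: from $\CS \vdash A$ one gets $\Mw \Vdash A$ for every Carlson model $\Mw$ by Theorem \ref{thm:soundComplCS}, hence $\CS \Vdash (\Ra x:A)$, hence $\labCS \vdash (\Ra x:A)$ by Theorem \ref{thm:complCS}; conversely, from $\labCS \vdash (\Ra x:A)$ one gets $\CS \Vdash (\Ra x:A)$ by Theorem \ref{thm:soundCS}, hence $\Mw \Vdash A$ for every $\Mw$, hence $\CS \vdash A$ again by Theorem \ref{thm:soundComplCS}.

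There is no genuine obstacle here; the only point to be careful about is the notational translation between labelled sequents and pure formulas, and the fact that the choice of witness label $x$ in $\labCS \vdash (\Ra x:A)$ does not matter because Carlson semantics is uniform in worlds. No further proof theory or countermodel construction is required — the work has already been done in the preceding two subsections.
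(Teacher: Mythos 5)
Your proof is correct and follows exactly the paper's own route: chaining Theorem \ref{thm:soundComplCS} with Theorems \ref{thm:soundCS} and \ref{thm:complCS} applied to the sequent $\Ra x:A$, with the semantic validity of that sequent as the pivot. The extra care you take in checking that $\CS \Vdash (\Ra x:A)$ coincides with $\Mw \Vdash A$ for all Carlson models is a detail the paper leaves implicit, but it is the same argument.
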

\begin{proof}
    Apply theorems \ref{thm:soundCS} and \ref{thm:complCS} for the sequent $\Ra x: A$ and apply theorem \ref{thm:soundComplCS}.
\end{proof}

\section{Conclusion and Future Works}\label{sec:concl}
We have introduced a non-wellfounded calculus for the bimodal provability logic $\CS$ by a two-sorted labelled approach. We have shown soundness and completeness for this system, the latter of which relies on a primitive decision procedure and a countermodel construction. 

For the future, we want to investigate modal extensions of this calculus. This should be easy for extensions that have a first-order Kripke correspondence, as one can simply add the semantic conditions as rules for the relational atoms. 
For Kripke-incomplete logics, such as $\ER$ (see \cite{Carlson86,Visser95}), we conjecture that one could introduce a different progress condition which can capture the generalised semantics.

Building on that, it might be interesting to see whether these labelled proof systems might be reducible to simple sequent calculi without labels in a similar way to \cite{Lyon_2025,Pimentel19}, except that we would consider non-wellfounded proofs. We expect to find a system similar to the non-wellfounded one in \cite{Shamkanov14}.
For applications of these calculi, it will be interesting to see how they enable one to obtain results about these logics, such as interpolation (see \cite{Afsharietal_2021,Shamkanov14}).

Another line of further research could include the development of calculi for other provability logics such as Feferman provability \cite{Feferman60,Shavrukov94,Visser89}, slow provability \cite{Henk16}, intuitionistic provability \cite{Ardeshir18}, or polymodal provability logic $\GLP$ \cite{Japaridze85}.

\bibliographystyle{splncs04}
\bibliography{references}

\end{document}